\documentclass[pdflatex,sn-mathphys-num]{sn-jnl}

\usepackage{graphicx}%
\usepackage{multirow}%
\usepackage{amsmath,amssymb,amsfonts}%
\usepackage{amsthm}%
\usepackage{mathrsfs}%
\usepackage[title]{appendix}%
\usepackage{xcolor}%
\usepackage{textcomp}%
\usepackage{manyfoot}%
\usepackage{booktabs}%
\usepackage{algorithm}%
\usepackage{algorithmicx}%
\usepackage{algpseudocode}%
\usepackage{listings}%

\theoremstyle{thmstyleone}%
\newtheorem{theorem}{Theorem}
\newtheorem{proposition}[theorem]{Proposition}%

\theoremstyle{thmstyletwo}%
\newtheorem{remark}{Remark}%

\theoremstyle{thmstylethree}%
\newtheorem{definition}{Definition}%

\begin{document}

\title[Article Title]{Renewal process's guide to fractional Navier--Stokes equations}

\author*{\fnm{Zhang} \sur{Hong}}\email{zhanghong13@cdut.cn}


\abstract{The Navier--Stokes equations, which remain unsolved, are crucial equations in fluid mechanics. Discovering the solutions to the Navier--Stokes equations is one of the challenging Millennium problems. In 1900, Hilbert proposed a potential approach to tackle this problem by establishing the relationship between microscopic dynamics and the macroscopic continuum equations. The key bridge is the derivation of Boltzmann equation and the theory of probability. In this paper, we shall use the collision renewal process with
 arbitrarily distributed waiting times to derive the Boltzmann equation for the time evolution of the probability of the velocity and the displacement of the particle, based on which we prove that the renewal process with exponential collision waiting time is equivalent to the classical Navier--Stokes equations, and that with power--law waiting time is equivalent to the fractional Navier--Stokes equations. Since the collision renewal process with
 arbitrarily distributed waiting times is a random process and is easy to perform  the stochastic simulations of trajectories to obtain the corresponding solution,  we actually find a stochastic approach to solve the classical and fractional Navier--Stokes equations.}

\keywords{Navier--Stokes Equations, Hilbert's Sixth Problem, Boltzmann equation, Probability, Renewal process}



\maketitle

\section{\label{Introduction}Introduction}

The Navier--Stokes (NS) equations are crucial equations that govern the flow of fluids. Even though the equations were formulated in the 19th Century, they have remained unsolved until now. The Clay Mathematics Institute (CMI) in
an official problem description of the NS equation problem mentions:
“There are many fascinating problems and conjectures about the behavior of solutions of the Euler and Navier–Stokes
 equations. ... Standard methods from PDE appear inadequate to settle the problem. Instead, we probably need some deep, new ideas."\cite{F2025}

In 1900 Hilbert proposed the sixth problem,
he says: “The investigations on the foundations of geometry suggest the problem: To treat in the
same manner, by means of axioms, those physical sciences in which already today
mathematics plays an important part; in the first rank are the theory of probabilities
and mechanics." Hilbert suggests a program that aims to
give a rigorous derivation of the macroscopic laws of fluid motion, starting from  the microscopic Newton’s laws on the atomistic level,
using Boltzmann’s kinetic theory as an intermediate step. \cite{H1901, D2025}

Most recently
Deng et al provided a rigorous derivation of Boltzmann’s kinetic equation from the hard-sphere system
for rarefied gas, which is valid for arbitrarily long time as long as the solution to the Boltzmann equation
exists. This is the crucial step towards resolving Hilbert’s sixth problem.
The general strategy follows the paradigm for the long-time derivation of the wave kinetic equation in wave turbulence theory. This is based on propagating a long-time cumulant effect, which keeps memory of the full collision history of the relevant particles, by a partial time expansion. \cite{D2025, DHM2025}

On the other hand, in recent years the NS equations have been extended to the fractional NS equations and regarded as a special case of the latter. \cite{C2015, YWW2025} Compared with the NS equations, the fractional NS equations are capable of handling the memory effects of complex non-Newtonian fluids and internal turbulence during motion. Moreover, they possess more significant advantages in describing anomalous diffusion in fractal media. \cite{ZP2017} Anomalous diffusion is a generalization of the classical Brownian motion.
 The hallmark of anomalous diffusion is the power--law dependence of the mean squared displacement on time, expressed as $\langle\Delta x^2\rangle \sim t^{\beta}$. When the exponent $\beta>1$, the behavior is classified as superdiffusion; for $0<\beta<1$, it is termed subdiffusion; and the case where $\beta=1$ corresponds to normal diffusion (Brownian motion).  One suitable model for describing anomalous diffusion phenomena is the continuous time random walk (CTRW) of which a special case is the Poisson process.
  In recent decades, CTRW and the corresponding master equation approach have been applied to derive the equivalent fractional equations, including the fractional Fokker-Planck equation \cite{MK2000}, the fractional Feymann--Kac equatiion \cite{CB2011,ZLL2013},   the fractional diffusion-reaction equation \cite{SSS2006, AYL2010, HLW2006,F2010}, the
  fractional advection-diffusion-reaction equation \cite{ZL2018,ZL2019},
   and so on. Since the continuous time random walk  is a probability model that facilitates random Monte Carlo simulations of trajectories, it actually provides a  way to solve the equivalent
 fractional equations and obtain their statistical solutions.

  There have been several attempts to establish the equivalence between the equations associated with the fractional NS equations and the CTRW model. For example, using the CTRW or an analogous approach, Friedrich et al. obtained the Kramer--Fokker--Plank equation with a fractional substantial collision derivative, which represents the nonlocal couplings in time and space \cite{FGE2006};  Goychuka generalized the linear Boltzmann equation for the fractional superdiffusive transport of the L$\acute{e}$vy walk type in external force fields, and gave a purely phenomenological fractional BGK
equation without any
well-established trajectory counter-part \cite{G2017}, from which the fractional-order constitutive equations for phonon heat transport were further predicted \cite{LC2019}; and Stokes et al. generalized the Boltzmann equation for nonequilibrium charged particle transport via localized and trapped states \cite{SPCW2016}, etc.
However, to our knowledge, the equivalence of the fractional NS equations from the CTRW model and the corresponding master equations has not been derived since it is difficult for the CTRW model to remember other complex types of interaction (e.g. collisions) in addition to the particle diffusion in the history.

The renewal process is a random process whose time intervals between two renewal steps are arbitrarily distributed. \cite{MO2014} The advantage of the renewal process lies in its ability to retain all contributions from the different kinds of interactions between particles in the history. The renewal process is first used to describe the chemical continuous time random walk (CTRW) by  Aquino et. al. \cite{AD2017}, and then it is employed to derive generalized rate equations for complex chemical reactions and the diffusion of reactants in closed and open heterogeneous chemical--diffusion systems \cite{ZLFL2025}. Additionally, it is used to derive the classical and fractional Hopfield neural networks with complex interactions among neurons \cite{ZLD2026}.
Herein we shall use  the renewal process to derive the
Boltzmann equation from the microscopic dynamics, based on which we obtain the Euler equations and the classical and fractional Navier--Stokes equations by using Chapman--Enskog Method, which is different from the scaling limit method proposed by Deng et al in \cite{DHM2025}. We will show that the renewal process can remember the full historical collisions and prove that the renewal process with exponential distributed collision waiting time is equivalent to the classical Navier--Stokes equations, and that with power--law distributed collision waiting time is equivalent to the fractional Navier--Stokes equations.  Additionally, since the solution of the stochastic simulation of trajectories of the renewal process can be naturally obtained, we actually find an equivalent stochastic method for solving the classical and fractional NS equations. Finally, in this paper, a well-established trajectory counter-part is also given for the phenomenological fractional BGK equation predicted in \cite{G2017}.


\section{The collision renewal process}\label{renewal-process}

We consider the system with $m$ particles that move with random collisions in a $N$-dimensional space. We assume that each of particles has a well-defined diameter $R$ and that all collisions are perfectly elastic.
 Let the position of the particle $i$ be $r_i$ and the velocity of the particle $i$ be  $v_i$, so the state vector is $(r_i,v_i)$.
We assume that there is a random waiting time $t_{i}$ for each particle $i$ to begin to collide with the other particle and the random waiting times $t_{1},t_{2},...,t_{m}$ are all independent for simplicity.
Let $\tau_n=\min\{t_1,t_2,...,t_{m}\}$. Then $\tau_1,...\tau_n$ are independent and identically distributed (i.i.d.) random variables.
If $\tau_n=t_i$, then the nth collision is that  $i$ collides with another particle $j\neq i$ with the measure $\tilde{P}_{ij}=\omega_{ij}|g\cdot k|\pi d^2$, where $g=v_i-v_j$, $k=\frac{x_j-x_i}{|x_i-x_j|}$ and $\omega_{ij}=|x_i-x_j|^{-\gamma}$ is a distance weighting function. Note that the other particle $j$ needs to satisfy the geometrical condition $|x_i-x_j|\leq R$ and the relative motion condition $g\cdot k<0$.
After this collision, according to the assumption of elastic collision, the velocities of $i$ and $j$ become $v'_i=v_i-(k\cdot g)k$, $v'_j=v_j+(k\cdot g)k$, respectively.

\begin{definition}[The collision renewal process]
Let $N(t)$ be the number of collisions until time $t$. Then
$N(t)=\max\{n\leq t: \tau_1+...+\tau_n\leq t\}$ is a renewal process. We call such a process the collision renewal process. The time interval $\tau_n$ is called the renewal waiting time and each collision is called a renewal event (or a renewal step).
\end{definition}

\begin{remark}
Let $S_n=\tau_{i=1}^{n}\tau_i$, then $S_n$ is the arriving time for just taking the nth collision renewal step.
\end{remark}

\subsection{Distribution of the renewal waiting time}
\label{Distribution-waiting-time}

In this section, we will discuss the distribution of the renewal waiting time $\tau_n$.
First, it is easy to find that
\begin{eqnarray}
 P(\tau_n\leq t)&&=1-P(\tau_n\geq t)=1-P(\min\{t_1,t_2,...,t_{m}\}\geq t)\nonumber\\
 &&=1-\prod_{i=1}^{m}
P(t_i\geq t)=1-\prod_{i=1}^{m}
\Psi_i(t).
\label{survivaltimedistribution}
\end{eqnarray}
Here, $\Psi_i(t)=P(t_i\geq t)$ is the survival probability for the particle $i$ not colliding with other particles in the time interval $[0,t]$.
If $\psi_i(t)$ is the probability density function (PDF) of the waiting time $t_i$ for the particle $i$, then
$\Psi_i(t)=\int_t^{\infty}\psi_i(t')dt'$.
Let $\Phi(t)=\prod_{i=1}^{m}
\Psi_i(t)$. Then from Eq.~\eqref{survivaltimedistribution} one can see that $\Phi(t)=1-P(\tau_n\leq t)=P(\tau_n\geq t)$ is the
survival probability that no new renewal events occur in the system in the time interval $[0,t]$.

Let $\phi_i(t)$ be the PDF of the event that the collision of particle $i$ first occurs at time $t$, while none of the other collisions have taken place until time $t$ (which means that the waiting time for the collision of particle $i$ is the minimum waiting time). Then we obtain the following proposition.

\begin{proposition}
Let $\hat{\Phi}(s)$ and $\hat{\phi}_i(s) $ denote the Laplace transforms of $\Phi(t)$ and $\phi_i(t)$, respectively. Here, $i=1,2,...,m$. Then it holds that
  \begin{eqnarray}
  \hat{\Phi}(s)=\frac{1-\sum_{i=1}^{m}\hat{\phi}_i(s)}{s}.
  \label{relationship}
  \end{eqnarray}
\end{proposition}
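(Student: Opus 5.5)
The plan is to write $\phi_i(t)$ explicitly in terms of the individual waiting-time densities, observe that $\sum_i\phi_i(t) = -\Phi'(t)$, and then take Laplace transforms. Since the $t_i$ are independent, the density of the event that particle $i$ collides first at time $t$ while no other particle has collided before $t$ is
\begin{equation*}
\phi_i(t)=\psi_i(t)\prod_{j\neq i}\Psi_j(t),
\end{equation*}
because $t_i$ has density $\psi_i(t)$ at $t$ and each $t_j$, $j\neq i$, must exceed $t$, which has probability $\Psi_j(t)$. This is the only modelling input, and it uses the independence assumed in Section~\ref{renewal-process}.

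Next I differentiate $\Phi(t)=\prod_{i=1}^m\Psi_i(t)$ by the product rule. Using $\Psi_i'(t)=-\psi_i(t)$, which follows from $\Psi_i(t)=\int_t^\infty\psi_i(t')dt'$, this gives
\begin{equation*}
\frac{d\Phi}{dt}(t)=\sum_{i=1}^m\Psi_i'(t)\prod_{j\neq i}\Psi_j(t)=-\sum_{i=1}^m\phi_i(t).
\end{equation*}
So the density of the minimum $\tau_n$ is $\sum_i\phi_i$, consistent with Eq.~\eqref{survivaltimedistribution}.

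Finally I take Laplace transforms, using the rule $\widehat{f'}(s)=s\hat f(s)-f(0)$ and the normalization $\Phi(0)=\prod_i\Psi_i(0)=1$, since each $\Psi_i(0)=\int_0^\infty\psi_i=1$. This gives $s\hat\Phi(s)-1=-\sum_i\hat\phi_i(s)$, which rearranges to Eq.~\eqref{relationship}. An equivalent route is to integrate $\Phi(t)=1-\int_0^t\sum_i\phi_i(t')dt'$ and use the transform of a running integral, $\hat f(s)/s$, together with $\widehat{1}=1/s$.

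The main point needing care is justifying these operations: $\Phi$ must be absolutely continuous so that $\Phi' = -\sum_i\phi_i$ almost everywhere, and the transforms must exist. I would assume each $\psi_i$ is an integrable density. Then $\Psi_i$ is absolutely continuous and bounded by $1$, products of such functions are absolutely continuous on compact intervals, and all transforms converge for $\mathrm{Re}\,s>0$ because $0\le\Phi\le1$ and $\sum_i\phi_i$ integrates to at most $1$. Taking the integrated form $\Phi(t)=1-\int_0^t\sum_i\phi_i$ as the working identity avoids any differentiability issues.
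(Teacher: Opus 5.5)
Your proposal is correct and follows the paper's own proof: write $\phi_i(t)=\psi_i(t)\prod_{j\neq i}\Psi_j(t)$, use the product rule to get $\Phi'(t)=-\sum_{i=1}^m\phi_i(t)$, and take Laplace transforms with $\Phi(0)=1$. Your regularity remarks (absolute continuity of $\Phi$, and convergence of the transforms for $\mathrm{Re}\,s>0$) are not in the paper, but they only make the same argument more careful.
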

\begin{proof}
Since $\phi_i(t)$ is the PDF of the event in which the waiting time for the collision of particle $i$ is the minimum waiting time, it can be obtained by differentiating the distribution function for such an event with respect to $t$, that is,
\begin{eqnarray}
\phi_i(t)&&=[P(\tau_n=\tau_i,\tau_n\leq t))]'\nonumber\\
&&=[P(\tau_j\geq \tau_i,\tau_i\leq t)):j=1,..,m,j\neq i]'\nonumber\\
&&=\bigg[\int_0^t \psi_i(t')dt'\int_{t'}^{\infty}\psi_{1}(t^{''})dt^{''}...\int_{t'}^{\infty}\psi_{m}(t^{''})dt^{''}\bigg]'=\psi_i(t)\prod_{j\neq i}\Psi_j(t).
\label{phi}
\end{eqnarray}
On the other side,  we have
\begin{eqnarray}
 \Phi^{'}(t)=-\sum_{i=1}^{m}\bigg[\psi_i(t)\prod_{j\neq i}\Psi_j(t)\bigg]=-\sum_{i=1}^{m}\phi_i(t).
 \label{relationtime}
\end{eqnarray}
Taking the Laplace transform of Eq.~\eqref{relationtime} yields
$u\hat{\Phi}(u)-\Phi(0)=-\sum_{i=1}^{m}\hat{\phi}_i(u)$, which can be changed to Eq.~\eqref{relationship}
 where $\Phi(0)=1$ is used.
\end{proof}

\subsection{The renewal steps of the collision renewal process}
\label{simulationsteps}
In this section, we shall outline the collision renewal steps associated with the simulation of trajectories in the collision renewal process.
The first renewal step includes
the following substeps:

Substep one. The initial positions and velocities of all particles are set at the initial time. This substep only belongs to the first renewal step of the collision renewal process.

Substep two.  The random
waiting times $t_i (i=1,2,...,m)$  as internal clocks are chosen from a series of values distributed
according to $\psi_i(t)$, respectively.
 If the distribution is an exponential distribution $\alpha e^{-\alpha t}$ for $\alpha>0$,  then we can use $-\frac{ln U}{\alpha}$ to get the sample, where $U$ is a random variate drawn from the uniform distribution in the interval $[0,1]$. If the distribution is power--law, i.e., $\beta\tau_0^{\beta}t^{-(1+\beta)}$ for $0<\beta<1$,  then we use
$\tau_0 (1-U)^{-\frac{1}{\beta}}$ to obtain the sample.

Substep three. Find the minimum clock time, namely, $\min\{t_i:i=1,2,...,m\}$. If the minimum waiting time is $t_i$ and $\sum_{j\neq i}\tilde{P}_{ij}\neq 0$, then we will
set the transition probability matrix $(P_{ij})_{m\times m}$ whose components are $P_{ij}=\frac{\tilde{P}_{ij}}{\sum_{j\neq i}\tilde{P}_{ij}}$ for $j\neq i$ and $P_{ii}=0$,
and then choose a special target particle $j$ to collide according to probability $P_{ij}$.

Subtep four. The velocities of $i$ and $j$  are both renewed as below: $v'_i=v_i-(k\cdot g)k$, $v'_j=v_j+(k\cdot g)k$ according to the assumption of elastic collision. The positions of two particles are invariable.

After completing the four substeps, we have accomplished the first renewal step. Subsequently, we initiate the second renewal step by executing only substeps two--four. After that, the renewal cycles are repeated. Note that in this process we ignore the fluctuation effect of the random selection of two equal minimum waiting times and assume that the particles are dense enough to collide (i.e., $\sum_{j\neq i}\tilde{P}_{ij}\neq 0$) for simplicity.
  Note also that according to above renewal steps the Monte Carlo simulation for the collision renewal process can be easily performed.

\section{Generalized master equation: microscopic description}
Let $\vec{x}(t)=(x_1(t),...,x_m(t))$ be the position vector whose component $x_i$ is the position of particle $i$. Let $\vec{v}(t)=(v_1(t),...,v_m(t))$ be the velocity vector whose component $v_i(t)$ is the velocity of the particle $i$ at $t$. Then $\vec{x}(t)$ and $\vec{v}(t)$ constitute a state tensor $(\vec{x}(t),\vec{v}(t))$.
We now investigate the time evolution of the probability of the state tensor in the collision renewal process.
Let $P(\vec{x},\vec{v}, t)$ denote the probability distribution $P(\vec{x}(t)=\vec{x},\vec{v}(t)=\vec{v})$. Here, $\vec{x}=(x_1,...,x_m)$ and $\vec{v}=(v_1,...,v_m)$ are  position and  velocity vectors whose components are constant.
First, according to renewal
theory, the probability that the system is in state $(\vec{x},\vec{v})$ at time $t$
is equal to the probability that the system just reaches state $(\vec{x},\vec{v})$ at an earlier time $t'<t$ and remains
in that state until $t$.
If $R_{n}(\vec{x},\vec{v},t)$ is the PDF of just arriving at the state $(\vec{x},\vec{v})$ at time $t$ after $n$ renewal steps, then one has
\begin{eqnarray}
    P(\vec{x},\vec{v},t)=\sum_{n=1}^{\infty}\int_0^t R_n(x-\vec{v}(t-t'),\vec{v},t')\Phi(t-t')dt'.
    \label{eq:renewalbalanceequation1}
\end{eqnarray}
Note that this is a balance equation for the collision renewal process.

In addition, since the random event that the system arrives at state $(\vec{x},\vec{v})$ at time $t$ through $(n + 1)$th step renewal is equivalent to the random event that the state vector arrives at $(x-(\vec{v}-\vec{v_{ij}})(t-t'),\vec{v}-\vec{v_{ij}})$ at an earlier time $t'<t$ through $n$th step renewal (that is, $S_n=t'<t$), and after a time interval of $t - t'$ to take the next renewal (i.e., the collision of particle $i$ with $j$) at $t$ (that is, $S_{n+1}=t$) causing the state to change to $(\vec{x},\vec{v})$, we can obtain the other balance equation for the collision renewal process as following,

\begin{eqnarray}
    R_{n+1}(\vec{x},\vec{v},t)=\sum_{j\neq i}\sum_{i=1}^{m}\int_0^t R_n(x-(\vec{v}-\vec{v_{ij}})(t-t'),\vec{v}-\vec{v_{ij}},t')\phi_i(t-t')P_{ij}dt'.
    \label{eq:renewalbalanceequation2}
\end{eqnarray}
 Here,  $\vec{v_{ij}}=(0,...,(g\cdot k)k,...,-(g\cdot k)k,...0)$ (i.e., the $i$th component is $(g\cdot k)k$, $j$th component is $-(g\cdot k)k$, and the left components are all $0$).

 \begin{theorem}[Generalized master equation]\label{materequationthm}
 Let $R(\vec{x},\vec{v},t)=\sum_{n=0}^{\infty}R_n(\vec{x},\vec{v},t)$ where $R_0(\vec{x},\vec{v},t)=P(\vec{x},\vec{v},0)\delta(t)$ and let
\begin{eqnarray}
    \hat{\Theta}_i(s)=\frac{\hat{\phi}_i(s)}{\hat{\Phi}(s)},
    \label{Theta}
\end{eqnarray}
 where $\hat{f}(s)$ denotes the Laplace transform of $f(t)$.
Then one can obtain the generalized master equation for the time evolution for the probability $P(\vec{x},\vec{v},t)$ as following
\begin{eqnarray}
 \vec{v}\cdot\nabla P(\vec{x},\vec{v},t)+\frac{\partial P(\vec{x},\vec{v},t)}{\partial t}
 &&=\sum_{j\neq i}\sum_{i=1}^{m}\int_0^t P(\vec{x}-(\vec{v}-\vec{v_{ij}})(t-t'),\vec{v}-\vec{v_{ij}},t')\Theta_i(t-t')P_{ij}dt'\nonumber\\
&&-\sum_{i=1}^{m}P(\vec{x}-\vec{v}(t-t'),\vec{v},t')\Theta_i(t-t')dt'.
 \label{master}
 \end{eqnarray}
\end{theorem}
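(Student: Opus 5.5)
The plan is the standard CTRW route: move the two balance equations into Fourier--Laplace space, solve there, and come back. The free streaming $\vec{x}\mapsto\vec{x}-\vec{v}(t-t')$ in Eqs.~\eqref{eq:renewalbalanceequation1} and \eqref{eq:renewalbalanceequation2} becomes a shift of the Laplace variable. Write $\tilde{f}(\vec{k},\vec{v},s)$ for the Fourier transform in $\vec{x}$ (dual variable $\vec{k}$) followed by the Laplace transform in $t$. Then
\[
\int_0^t f(\vec{x}-\vec{v}(t-t'),t')g(t-t')\,dt' \;\longrightarrow\; \tilde{f}(\vec{k},\vec{v},s)\,\hat{g}(s+i\vec{k}\cdot\vec{v}),
\]
because the translation by $\vec{v}\tau$ contributes the factor $e^{-i\vec{k}\cdot\vec{v}\tau}$, which merges with $e^{-s\tau}$. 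I treat the velocity argument $\vec{v}$ as a parameter throughout and write $\sigma=s+i\vec{k}\cdot\vec{v}$ for the shifted variable.

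\textbf{Step 1 (summing over $n$).} Sum Eq.~\eqref{eq:renewalbalanceequation2} over $n\ge0$ and add $R_0$. This gives the renewal equation for $R$:
\[
\tilde{R}(\vec{k},\vec{v},s)=\tilde{P}_0(\vec{k},\vec{v})+\sum_{i}\sum_{j\neq i}\tilde{R}(\vec{k},\vec{v}-\vec{v_{ij}},s)\,\hat{\phi}_i(s+i\vec{k}\cdot(\vec{v}-\vec{v_{ij}}))P_{ij}.
\]
Eq.~\eqref{eq:renewalbalanceequation1}, with the sum taken from $n=0$ so that the initial condition enters, gives $\tilde{P}(\vec{k},\vec{v},s)=\tilde{R}(\vec{k},\vec{v},s)\hat{\Phi}(\sigma)$.

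\textbf{Step 2 (eliminating $R$).} Set $\tilde{R}=\tilde{P}/\hat{\Phi}(\sigma)$ at each velocity argument. In the gain term this produces $\tilde{P}(\vec{k},\vec{v}-\vec{v_{ij}},s)\,\hat{\Theta}_i(s+i\vec{k}\cdot(\vec{v}-\vec{v_{ij}}))$ with $\hat{\Theta}_i$ as in Eq.~\eqref{Theta}. For the left side, apply the Proposition (Eq.~\eqref{relationship}) evaluated at $\sigma$:
\[
\frac{1}{\hat{\Phi}(\sigma)}=\frac{\sigma}{\hat{\Phi}(\sigma)}\,\hat{\Phi}(\sigma)\cdot\frac{1}{\hat{\Phi}(\sigma)} = \frac{\sigma\bigl(1-\sum_i\hat{\phi}_i(\sigma)\bigr)}{\hat{\Phi}(\sigma)\bigl(1-\sum_i\hat{\phi}_i(\sigma)\bigr)},
\]
so that $\tilde{P}/\hat{\Phi}(\sigma)=\sigma\tilde{P}+\tilde{P}\sum_i\hat{\Theta}_i(\sigma)$. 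The equation then reads
\[
(s+i\vec{k}\cdot\vec{v})\tilde{P}-\tilde{P}_0 = \sum_i\sum_{j\ne i}\tilde{P}(\vec{k},\vec{v}-\vec{v_{ij}},s)\hat{\Theta}_i(s+i\vec{k}\cdot(\vec{v}-\vec{v_{ij}}))P_{ij} - \sum_i\tilde{P}(\vec{k},\vec{v},s)\hat{\Theta}_i(\sigma).
\]
Here I use $\sum_{j\neq i}P_{ij}=1$ from the normalization in Substep three, so that the loss term needs no $P_{ij}$ weight.

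\textbf{Step 3 (inverting).} The left side $(s+i\vec{k}\cdot\vec{v})\tilde{P}-\tilde{P}_0$ inverts to $\partial_tP+\vec{v}\cdot\nabla P$. Each product on the right inverts, by the same shift rule read backwards, to a convolution with translated spatial argument. This yields Eq.~\eqref{master}, in which the loss term is understood as $\int_0^t\cdots dt'$.

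\textbf{Main obstacle.} The delicate point is the bookkeeping of velocity-dependent shifts. The gain term carries the shift $\vec{v}-\vec{v_{ij}}$, and $\vec{v_{ij}}$ itself depends on the pre-collision velocities through $g$ and $k$. The Fourier--Laplace argument therefore has to be done pointwise in velocity, and the identity in Step 2 must be applied with the correct shifted variable in each term. A secondary issue is the indexing: the $n=0$ term $R_0$ must be included in Eq.~\eqref{eq:renewalbalanceequation1} for the initial condition to appear. Apart from these, I expect the manipulations to be routine and formal, assuming convergence of the Neumann series for $R$.
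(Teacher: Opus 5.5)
Your proposal is correct and is essentially the paper's argument: Fourier--Laplace transform both balance equations, use $\tilde P=\tilde R\,\hat\Phi(\sigma)$ and Eq.~\eqref{relationship} at the shifted argument, then eliminate $R$ (you substitute $\tilde R=\tilde P/\hat\Phi(\sigma)$ directly rather than starting, as the paper does, from $\sigma\tilde P-\tilde P_0$). Two corrections: the displayed chain in Step 2 is wrong as written, since its first equality asserts $1/\hat\Phi(\sigma)=\sigma/\hat\Phi(\sigma)$, whereas the identity you actually need, $1/\hat\Phi(\sigma)=\sigma+\sum_i\hat\Theta_i(\sigma)$, comes from dividing $\sigma\hat\Phi(\sigma)=1-\sum_i\hat\phi_i(\sigma)$ by $\hat\Phi(\sigma)$; and the normalization $\sum_{j\neq i}P_{ij}=1$ is not used, because the loss term never carries a $P_{ij}$ weight.
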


\begin{proof}
 From the first balance equation \eqref{eq:renewalbalanceequation1} we obtain
\begin{eqnarray}
P(\vec{x},\vec{v},t)=\int_0^t R(\vec{x}-\vec{v}(t-t'),\vec{v},t)\Phi(t-t')dt'.
\label{P-R}
\end{eqnarray}
 From the second balance equation \eqref{eq:renewalbalanceequation2} one has
\begin{eqnarray}
&&R(\vec{x},\vec{v},t)-R_0(\vec{x},\vec{v},t)=\sum_{j\neq i}\sum_{i=1}^{m}\int_0^t R(\vec{x}-(\vec{v}-\vec{v_{ij}})(t-t'),\vec{v}-\vec{v_{ij}},t)\phi_i(t-t')P_{ij}dt'.~~~~~~~
\label{R-R}
\end{eqnarray}
Taking Fourier $\vec{x}\rightarrow \vec{k}$ and Laplace $t\rightarrow s$ transforms of Eq.~\eqref{P-R} yields
\begin{eqnarray}
\hat{\hat{P}}(\vec{k},\vec{v},s)=\hat{\hat{R}}(\vec{k},\vec{v},s)\Phi(u+i\vec{k}\cdot \vec{v}).
\label{P-R-laplace-fourier}
\end{eqnarray}
where the function  $\hat{\hat{f}}(\vec{k},s)$ denotes the Fourier-Laplace transform of $f(\vec{x},t)$.
By taking the Fourier $\vec{x}\rightarrow \vec{k}$ transform of Eq.~\eqref{R-R}, we find
\begin{eqnarray}
&&\hat{R}(\vec{k},\vec{v},t)-\hat{R}_0(\vec{k},\vec{v},t)=\sum_{j\neq i}\sum_{i=1}^{m}\int_0^t \hat{R}(\vec{k},\vec{v}-\vec{v_{ij}},t)\phi_i(t-t')e^{-i\vec{k}\cdot(\vec{v}-\vec{v_{ij}})(t-t')}P_{ij}dt'.~~
\label{R-R-Fourier}
\end{eqnarray}
Here, $\hat{f}(\vec{k})$ is the Fourier transform of $f(\vec{x})$.
Furthermore, we take the Laplace transform of Eq.~\eqref{R-R-Fourier} and combine with $R_0(\vec{x},\vec{v},t)=P(\vec{x},\vec{v},0)\delta(t)$, and obtain
\begin{eqnarray}
&&\hat{\hat{R}}(\vec{k},\vec{v},s)-\hat{P}(\vec{k},\vec{v},0)=\sum_{j\neq i}\sum_{i=1}^{m}\hat{\hat{R}}(\vec{k},\vec{v}-\vec{v_{ij}},s)\hat{\phi}_i[s+i\vec{k}\cdot(\vec{v}-\vec{v_{ij}}]P_{ij}.
\label{R-R-Fourier-laplace}
\end{eqnarray}
By using Eqs.~\eqref{relationship}, \eqref{P-R-laplace-fourier} and \eqref{R-R-Fourier-laplace}, we can rewrite $i\vec{k}\cdot\vec{v}\hat{\hat{P}}(\vec{k},\vec{v},s)+[s\hat{\hat{P}}(\vec{k},\vec{v},s)-\hat{P}(\vec{k},\vec{v},0)]$ as following
\begin{eqnarray}
i\vec{k}\cdot\vec{v}\hat{\hat{P}}(\vec{k},\vec{v},s)+[s\hat{\hat{P}}(\vec{k},\vec{v},s)-\hat{P}(\vec{k},\vec{v},0)]
&&=(i\vec{k}\cdot\vec{v}+s)\hat{\hat{P}}(\vec{k},\vec{v},s)-\hat{P}(\vec{k},\vec{v},0)\nonumber\\
&&=(i\vec{k}\cdot\vec{v}+s)\hat{\hat{R}}(\vec{k},\vec{v},s)\Phi(s+i\vec{k}\cdot \vec{v})-\hat{P}(\vec{k},\vec{v},0)\nonumber\\
&&=(i\vec{k}\cdot\vec{v}+s)\hat{\hat{R}}(\vec{k},\vec{v},s)\frac{1-\sum_{i=1}^{m}\hat{\phi}_i(s+i\vec{k}\cdot \vec{v})}{s+i\vec{k}\cdot \vec{v}}-\hat{P}(\vec{k},\vec{v},0)\nonumber\\
&&=\hat{\hat{R}}(\vec{k},\vec{v},s)-\sum_{i=1}^{m}\hat{\hat{R}}(\vec{k},\vec{v},s)\hat{\phi}_i(s+i\vec{k}\cdot \vec{v})-\hat{P}(\vec{k},\vec{v},0)\nonumber\\
&&=\sum_{j\neq i}\sum_{i=1}^{m}\hat{\hat{R}}(\vec{k},\vec{v}-\vec{v_{ij}},s)\hat{\phi}_i[s+i\vec{k}\cdot(\vec{v}-\vec{v_{ij}})]P_{ij}+\hat{P}(\vec{k},\vec{v},0)\nonumber\\
&&-\sum_{i=1}^{m}\hat{\hat{R}}(\vec{k},\vec{v},s)\hat{\phi}_i(s+i\vec{k}\cdot \vec{v})-\hat{P}(\vec{k},\vec{v},0)\nonumber\\
&&=\sum_{j\neq i}\sum_{i=1}^{m}\hat{\hat{P}}(\vec{k},\vec{v}-\vec{v_{ij}},s)\frac{\hat{\phi}_i[s+i\vec{k}\cdot(\vec{v}-\vec{v_{ij}})]}{\hat{\Phi}[s+i\vec{k}\cdot(\vec{v}-\vec{v_{ij}})]}P_{ij}\nonumber\\
&&-\sum_{i=1}^{m}\hat{\hat{P}}(\vec{k},\vec{v},s)\frac{\hat{\phi}_i(s+i\vec{k}\cdot \vec{v})}{\hat{\Phi}(s+i\vec{k}\cdot \vec{v})}
\label{masterlaplacefourier}
\end{eqnarray}
Taking the inverse Laplace transform of Eq.~\eqref{masterlaplacefourier} yields
\begin{eqnarray}
 i\vec{k}\cdot\vec{v}\hat{P}(\vec{k},\vec{v},t)+\frac{\partial \hat{P}(\vec{k},\vec{v},t)}{\partial t} &&=\sum_{j\neq i}\sum_{i=1}^{m}\int_0^t\hat{P}(\vec{k},\vec{v}-\vec{v_{ij}},t')\Theta_i(t-t')e^{-i\vec{k}\cdot(\vec{v}-\vec{v_{ij}})(t-t')}P_{ij}dt'\nonumber\\
&&-\sum_{i=1}^{m}\int_0^t\hat{P}(\vec{k},\vec{v},t')\Theta_i(t-t')e^{-i\vec{k}\cdot\vec{v}(t-t')}dt'.
 \label{masterFourier}
\end{eqnarray}
We then take the inverse Fourier transform of Eq.~\eqref{masterFourier} and obtain the generalized master equation \eqref{master} for the collision renewal process.
 \end{proof}

\section{Generalized rate equation: mesoscopic description}
We now focus on one particle and obtain the corresponding generalized rate equations based on the derived master equation \eqref{master}.

Let $n_l(x_a, v_b, t)=\sum_{\vec{x}:x_l=x_a, \vec{v}: v_l=v_b} P(\vec{x},\vec{v},t)$ be the probability for the particle $l$ whose position is $x_a$ and whose velocity is $v_b$. We assume that the states of all particles at time $t$ are independent. Then we find
\begin{eqnarray}
 n_l(x_a, v_b, t)n_j(x'_a, v'_b, t)
&&  = \sum_{\vec{x}:x_l=x_a, \vec{v}: v_l=v_b} P(\vec{x},\vec{v},t)\sum_{\vec{x}:x_j=x'_a, \vec{v}: v_j=v'_b} P(\vec{x},\vec{v},t)\nonumber\\&&=\sum_{\vec{x}:x_l=x_a, x_j=x'_a; \vec{v}: v_l=v_b, v_j=v'_b} P(\vec{x},\vec{v},t),
  \label{independence}
\end{eqnarray}
for any $j\neq l$.
Additionally, one has
\begin{eqnarray}
 v_b\frac{\partial n_l(x_a,v_b,t)}{\partial x_a}   &&= \sum_{\vec{x}:x_l=x_a, \vec{v}: v_l=v_b} v_b \frac{\partial P(\vec{x},\vec{v},t)}{\partial x_a} \nonumber\\
  &&=\sum_{\vec{x}:x_l=x_a, \vec{v}: v_l=v_b} v_l \frac{\partial P(\vec{x},\vec{v},t)}{\partial x_a}+\sum_{j\neq i}\sum_{\vec{x}:x_l=x_a, \vec{v}: v_l=v_b} v_j \frac{\partial P(\vec{x},\vec{v},t)}{\partial x_j}\nonumber\\
  &&=\sum_{\vec{x}:x_l=x_a, \vec{v}: v_l=v_b}\vec{v}\cdot\nabla P(\vec{x},\vec{v},t).
  \label{oneterm}
\end{eqnarray}
In the second equation we used the fact $\sum_{\vec{x}:x_l=x_a, \vec{v}: v_l=v_b} v_j \frac{\partial P(\vec{x},\vec{v},t)}{\partial x_j}= v_j\frac{\partial n_l(x_a,v_b,t)}{\partial x_j}=0$.
Therefore, from Eqs.~\eqref{master}, \eqref{independence} and \eqref{oneterm}, we find
\begin{eqnarray}
 v_b\frac{\partial n_l(x_a,v_b,t)}{\partial x_a}+\frac{\partial n_l(x_a,v_b,t)}{\partial t}
 &&=\sum_{\vec{x}:x_l=x_a, \vec{v}: v_l=v_b}\bigg[v_b\frac{\partial P(\vec{x},\vec{v},t)}{\partial x_a}+\frac{\partial P(\vec{x},\vec{v},t)}{\partial t}\bigg]\nonumber\\
 &&=\sum_{\vec{x}:x_l=x_a, \vec{v}: v_l=v_b}\bigg[\vec{v}\cdot\nabla P(\vec{x},\vec{v},t)+\frac{\partial P(\vec{x},\vec{v},t)}{\partial t}\bigg]\nonumber\\
 &&=\sum_{j\neq i}\int\int dx'_a dv'_b\bigg\{\sum_{\vec{x}:x_l=x_a, \vec{v}: v_l=v_b;x_j=x'_a, \vec{v}: v_j=v'_b}\bigg[\sum_{i=1}^{m}\nonumber\\
 &&\int_0^t P(\vec{x}-(\vec{v}-\vec{v_{ij}})(t-t'),\vec{v}-\vec{v_{ij}},t')\Theta_i(t-t')P_{ij}dt'\nonumber\\
&&-\sum_{i=1}^{m}P(\vec{x}-\vec{v}(t-t'),\vec{v},t')\Theta_i(t-t')dt'\bigg]\bigg\}\nonumber\\
&&=\sum_{j\neq i}\int\int dx'_a dv'_b\bigg[\int_0^t n_l\bigg(x_a-[v_b\nonumber\\
&&-(k\cdot (v_b-v'_b))k](t-t'),v_b-(k\cdot (v_b-v'_b))k,t'\bigg)\nonumber\\
&&\cdot n_j\bigg(x'_a-[v'_b+(k\cdot (v_b-v'_b))k](t-t'),v'_b\nonumber\\
&&+(k\cdot (v_b-v'_b))k,t'\bigg)\bigg(\Theta_l(t-t')P_{lj}+\Theta_j(t-t')P_{jl}\bigg)\nonumber\\
&& -\int_0^t n_l\bigg(x_a-v_b(t-t'),v_b,t'\bigg)\nonumber\\
&&\cdot n_j\big(x'_a-v'_b(t-t'),v'_b,t'\big)\bigg(\Theta_l(t-t')P_{lj}\nonumber\\
&&+\Theta_j(t-t')P_{jl}\bigg)dt'\bigg].
\end{eqnarray}
Here, in the last equation we consider both the effects of the collisions of $l\rightarrow j$ ($l$ collides with $j$) and $j\rightarrow l$ ($j$ collides with $i$). To sum up, we obtain the following theorem for the generalized rate equation.
\begin{theorem}[Generalized rate equation]\label{rateeqationthm}
If $n_l(x_a, v_b, t)$ is the probability for the particle $l$ whose position is $x_a$ and whose velocity is $v_b$, then the generalized rate equation
\begin{eqnarray}
 v_b\frac{\partial n_l(x_a,v_b,t)}{\partial x_a}+\frac{\partial n_l(x_a,v_b,t)}{\partial t}
 &&=\sum_{j\neq i}\int\int dx'_a dv'_b\bigg[\int_0^t n_l\bigg(x_a-[v_b\nonumber\\
&&-(k\cdot (v_b-v'_b))k](t-t'),v_b-(k\cdot (v_b-v'_b))k,t'\bigg)\nonumber\\
&&\cdot n_j\bigg(x'_a-[v'_b+(k\cdot (v_b-v'_b))k](t-t'),v'_b\nonumber\\
&&+(k\cdot (v_b-v'_b))k,t'\bigg)\bigg(\Theta_l(t-t')P_{lj}+\Theta_j(t-t')P_{jl}\bigg)\nonumber\\
&& -\int_0^t n_l\bigg(x_a-v_b(t-t'),v_b,t'\bigg)\nonumber\\
&&\cdot n_j\bigg(x'_a-v'_b(t-t'),v'_b,t'\bigg)\bigg(\Theta_l(t-t')P_{lj}\nonumber\\
&&+\Theta_j(t-t')P_{jl}\bigg)dt'\bigg].
\label{rateequation}
\end{eqnarray}
holds.
\end{theorem}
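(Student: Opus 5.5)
The plan is to obtain \eqref{rateequation} by marginalizing the generalized master equation of Theorem~\ref{materequationthm} onto the coordinates of a single particle $l$. I will then close the resulting collision terms with the independence (molecular chaos) assumption \eqref{independence}.

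First, I sum both sides of \eqref{master} over all configurations $(\vec{x},\vec{v})$ with $x_l=x_a$ and $v_l=v_b$. On the left, the time derivative commutes with this partial summation and gives $\partial_t n_l$. For the transport term I invoke \eqref{oneterm}. It rests on the fact that, for $j\neq l$, each summand $v_j\partial_{x_j}P$ is a total derivative in a variable that is summed (integrated) out. Hence it vanishes under the usual decay (or periodic) boundary conditions, and only $v_b\,\partial_{x_a} n_l$ survives.

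Second, I treat the right-hand side. Among the pairs $(i,j)$ in \eqref{master}, only collisions involving particle $l$ change $v_l$ and so contribute after marginalization. These are the ordered pairs $(l,j)$ and $(j,l)$. The gain and loss contributions of pairs not involving $l$ cancel once summed over the other particles' velocities, since such a collision is a velocity relabelling with unit Jacobian: the elastic map $v_i\mapsto v_i-(k\cdot g)k$, $v_j\mapsto v_j+(k\cdot g)k$ is an involution preserving volume. For the remaining pairs I split the sum further, keeping the coordinates $(x_j,v_j)=(x'_a,v'_b)$ of the partner explicitly and integrating over them. This produces the $\int\int dx'_a\,dv'_b$ and the factor $\Theta_l P_{lj}+\Theta_j P_{jl}$, which accounts for $l$ initiating the collision and for $j$ initiating it.

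Third, in each such term I substitute the shifted arguments $\vec{x}-(\vec{v}-\vec{v_{ij}})(t-t')$ and $\vec{v}-\vec{v_{ij}}$ in the components $l$ and $j$ only. I then apply \eqref{independence} to factor the two-particle marginal at time $t'$ into $n_l\cdot n_j$. This yields the pre-collision arguments $v_b-(k\cdot(v_b-v'_b))k$ and $v'_b+(k\cdot(v_b-v'_b))k$ in the gain term, and free streaming in the loss term. Collecting the terms gives \eqref{rateequation}.

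The main obstacle is this second and third step. Justifying the cancellation of collisions not involving $l$ requires the measure-preserving property of the collision map, and the dependence of $P_{ij}$ and $k$ on positions must be handled consistently. Independence is also needed at the earlier time $t'$ along the shifted trajectories, not just at $t$. I would state it as an assumption holding for all times, which is the Stosszahlansatz. With that assumption, the factorization is direct.
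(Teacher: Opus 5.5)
Your proposal follows the paper's own route: sum the master equation \eqref{master} over all configurations with $x_l=x_a$, $v_l=v_b$, use \eqref{oneterm} to reduce the transport term to $v_b\,\partial_{x_a}n_l$, keep only the collisions $l\to j$ and $j\to l$ with the partner's coordinates $(x'_a,v'_b)$ integrated, and factor with the independence assumption \eqref{independence}. What you add are steps the paper uses silently, not a different argument: the gain--loss cancellation for pairs not involving $l$ via the unit-Jacobian collision map (which implicitly needs $\sum_{j\neq i}P_{ij}=1$ to split the loss term by pairs), and the requirement that independence hold at the earlier times $t'$.
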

Note that if we just consider the effect of the collision of $l\rightarrow j$, then the generalized rat equation becomes the simpler form
\begin{eqnarray}
 v_b\frac{\partial n_l(x_a,v_b,t)}{\partial x_a}+\frac{\partial n_l(x_a,v_b,t)}{\partial t}
 &&=\sum_{j\neq i}\int\int dx'_a dv'_b\bigg[\int_0^t n_l\bigg(x_a-[v_b\nonumber\\
 &&-(k\cdot (v_b-v'_b))k](t-t'),v_b-(k\cdot (v_b-v'_b))k,t'\bigg)\nonumber\\
&&\cdot n_j\bigg(x'_a-[v'_b+(k\cdot (v_b-v'_b))k](t-t'),v'_b\nonumber\\
&&+(k\cdot (v_b-v'_b))k,t'\bigg)\Theta_l(t-t')P_{lj}\nonumber\\
&& -\int_0^t n_l\bigg(x_a-v_b(t-t'),v_b,t'\bigg)\nonumber\\
&&\cdot n_j\bigg(x'_a-v'_b(t-t'),v'_b,t'\bigg)\Theta_l(t-t')P_{lj}dt'\bigg].
\label{rateeqautionsimple}
\end{eqnarray}
In the next section we shall just consider the effect of the collision of $l\rightarrow j$ for simplicity.

\section{Mesoscopic classical and fractional BGK equations}
In this section we shall consider two special cases of the collision renewal process and derived the corresponding classical and fractional BGK equations from the simple rate equation \eqref{rateeqautionmoresimple}.
\subsection{Exponential case and classical Boltzmann equation and BGK equation}
We first consider the special case of the collision renewal process with the exponential waiting time PDF.
\begin{theorem}[Classical Boltzmann equation]\label{Boltzmannthm1}
If in the collision renewal process the PDF for the waiting time $t_i$ is $\psi_i(t)=r_ie^{-r_i t} (i=1,2,...,m)$, then the classical Boltzmann equation
\begin{eqnarray}
 v_b\frac{\partial n_l(x_a,v_b,t)}{\partial x_a}+\frac{\partial n_l(x_a,v_b,t)}{\partial t} &&=\sum_{j\neq i}\int\int \bigg[ n_l(x_a,v_b-(k\cdot (v_b-v'_b))k,t)\nonumber\\
&&\cdot n_j(x'_a,v'_b+(k\cdot (v_b-v'_b))k,t) r_l P_{lj}\nonumber\\
&& - n_l(x_a,v_b,t)n_j(x'_a,v'_b,t)r_l P_{lj}\bigg]dx'_a dv'_b.
\label{Boltzmannequation}
\end{eqnarray}
holds.
\end{theorem}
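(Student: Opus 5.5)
The plan is to specialize the simple rate equation \eqref{rateeqautionsimple} (the $l\rightarrow j$ version of Theorem~\ref{rateeqationthm}) to exponential waiting times. The aim is to show that the memory kernel $\Theta_l$ collapses to a constant multiple of a Dirac delta. The time convolutions then become local, and the free-streaming shifts $x_a-v_b(t-t')$ disappear.

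First I would compute the ingredients from Section~\ref{Distribution-waiting-time}. With $\psi_i(t)=r_ie^{-r_it}$ we have $\Psi_i(t)=e^{-r_it}$, hence
\[
\Phi(t)=\prod_{i=1}^m\Psi_i(t)=e^{-rt}, \qquad r:=\sum_{i=1}^m r_i .
\]
By \eqref{phi},
\[
\phi_i(t)=\psi_i(t)\prod_{j\neq i}\Psi_j(t)=r_ie^{-rt}.
\]
Taking Laplace transforms gives $\hat\Phi(s)=1/(s+r)$ and $\hat\phi_i(s)=r_i/(s+r)$, consistent with \eqref{relationship}. Therefore, by \eqref{Theta},
\[
\hat\Theta_i(s)=\frac{\hat\phi_i(s)}{\hat\Phi(s)}=r_i,
\]
which is independent of $s$. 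Inverting gives $\Theta_i(t)=r_i\,\delta(t)$.

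Next I would substitute $\Theta_l(t-t')=r_l\delta(t-t')$ into both integrals $\int_0^t\cdots dt'$ on the right-hand side of \eqref{rateeqautionsimple}. The delta forces $t'=t$, so every spatial argument shift $[\,\cdot\,](t-t')$ vanishes. The gain term becomes
\[
n_l\bigl(x_a,v_b-(k\cdot(v_b-v'_b))k,t\bigr)\,n_j\bigl(x'_a,v'_b+(k\cdot(v_b-v'_b))k,t\bigr)\,r_lP_{lj},
\]
and the loss term becomes $n_l(x_a,v_b,t)\,n_j(x'_a,v'_b,t)\,r_lP_{lj}$. Summing over $j$ and integrating over $dx'_a\,dv'_b$ then yields exactly \eqref{Boltzmannequation}.

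The main point needing care is the endpoint convention for $\int_0^t\delta(t-t')f(t')\,dt'$, since the delta sits at the upper limit. I would handle it cleanly by working in Laplace space, as in the proof of Theorem~\ref{materequationthm}. There each convolution with $\Theta_l$ becomes multiplication by $\hat\Theta_l(s+i\vec k\cdot\vec v)=r_l$, so the phase factors $e^{-i\vec k\cdot \vec v(t-t')}$ disappear identically. Inverting the transforms then gives the local equation with the full weight $r_l$ and no factor $1/2$. This matches the convention implicit in \eqref{masterFourier}, where $\Theta_i$ is defined as the inverse Laplace transform of $\hat\Theta_i$, so the delta is interpreted as giving full weight at $t'=t$.
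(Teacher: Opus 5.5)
Your proposal is correct and follows the paper's proof essentially step for step. You compute $\Phi(t)=e^{-rt}$ and $\phi_i(t)=r_ie^{-rt}$ with $r=\sum_i r_i$, conclude $\hat\Theta_l(s)=r_l$ and hence $\Theta_l(t)=r_l\delta(t)$, and substitute this into the simplified rate equation \eqref{rateeqautionsimple}, where the delta removes the free-streaming shifts. Your added Laplace-space argument that the delta at the upper endpoint carries full weight (no factor $\tfrac12$) is a useful clarification that the paper leaves implicit.
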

\begin{proof}
 If $\psi_i(t)=r_ie^{-r_i t}$, then one has
\begin{eqnarray}
\Psi_{i}(t)=\int_t^{\infty}\psi_i(t')dt'=e^{-r_{i}t}.
\end{eqnarray}
Thus,
\begin{eqnarray}
\Phi(t)=\prod_{i=1}^{m}
\Psi_i(t)=\exp\bigg(-\sum_{i=1}^{m}r_{i}t\bigg).
\end{eqnarray}
 In addition, from Eq.~\eqref{phi}, one finds
\begin{eqnarray}
\phi_{i}(t)&=&\psi_i(t)\prod_{j\neq i}
\Psi_j(t)=r_i\exp\bigg(-\sum_{i=1}^{m}r_{i}t\bigg),
\end{eqnarray}
and then
\begin{eqnarray}
  \hat{\Theta}_{l}(s)=\frac{\phi_{l}(s)}{\Phi(s)}=r_l.
\end{eqnarray}
Inverting it to time space yields
\begin{eqnarray}
 \Theta_{l}(t)=r_l\delta(t).
 \label{exponentialtheta}
\end{eqnarray}
Substituting Eq.~\eqref{exponentialtheta} into Eq.~\eqref{rateeqautionsimple} yields Eq.\eqref{Boltzmannequation}.
\end{proof}
Furthermore, according to the BGK approximations $n^{eq}_l (x_a, v_b, t)\approx n_l(x_a,v_b-(k\cdot (v_b-v'_b))k,t)$ and $\sum_{j\neq i}P_{lj}=1$ where $n^{eq}_l$ is the local equilibrium distribution, and $\int\int n_j(x'_a,v'_b+(k\cdot (v_b-v'_b))k,t) dx'_a dv'_b=\int\int n_j(x'_a,v'_b,t)dx'_a dv'_b=1$,  Eq.~\eqref{Boltzmannequation} becomes
\begin{eqnarray}
&& v_b\frac{\partial n_l(x_a,v_b,t)}{\partial x_a}+\frac{\partial n_l(x_a,v_b,t)}{\partial t}\approx r_l  [n^{eq}_l(x_a,v_b,t)- n_l(x_a,v_b,t)].~~~~~~
\label{rateeqautionsimpleexponent}
\end{eqnarray}
One can see that Eq.~\eqref{rateeqautionsimpleexponent} is the classical BGK equation
\begin{eqnarray}
 && v_b\frac{\partial n_l(x_a,v_b,t)}{\partial x_a}+\frac{\partial n_l(x_a,v_b,t)}{\partial t}\approx -\frac{1}{\tau} [ n_l(x_a,v_b,t)-n^{eq}_l(x_a,v_b,t)].
\label{BGK}
\end{eqnarray}
Here, $\tau=\frac{1}{r_l}$ is the  mean collision time of particle $l$.

\subsection{Power--law case and fractional BGK equation}
If in the collision renewal process the PDF for the waiting time $t_l$ is  $\psi_{l}(t)\sim\tau_0^{\beta}\beta\frac{1}{t^{1+\beta}}$ where $0<\beta<1$, and the PDF for $t_j$ is $\psi_{j}(t)=r_j e^{-r_j t}$  for $ j\neq l$.
Then
 $\Psi_{l}(t)\sim\tau_0^{\beta}\beta\frac{1}{t^{\beta}}$ and $\Psi_{j}(t)=e^{-r_j t}$,
and thus
$ \Phi(t)\sim \tau_0^{\beta}\beta\frac{1}{t^{\beta}}\exp(-\sum_{j\neq l}r_j t)),$
and
 $\phi_l(t)\sim\tau_0^{\beta}\beta\frac{1}{t^{1+\beta}}\cdot \exp(-\sum_{j\neq l}^{m}r_j t)$.
In the Laplace space we find \cite{ZLFL2025}
\begin{eqnarray}
\hat{\Theta}_{l}(s)\sim\frac{1}{\Gamma(1-\beta)\tau_0^{\beta}}(s+\alpha)^{1-\beta},
\label{thetapowerlaw}
\end{eqnarray}
where $\alpha=\sum_{j\neq l}r_j$.

We assume that the effect of the collision is very small and use  the BGK approximations $n^{eq}_l (x_a, v_b, t')\approx n_l(x_a-[v_b-(k\cdot (v_b-v'_b))k](t-t'),v_b-(k\cdot (v_b-v'_b))k,t')$ and $n_l(x_a,v_b,t')\approx n_l(x_a-v_b(t-t'),v_b,t')$, and $\sum_{j\neq i}P_{lj}= 1$, and the normalization conditions $\int\int
 n_j(x'_a-[v'_b+(k\cdot (v_b-v'_b))k](t-t'),v'_b+(k\cdot (v_b-v'_b))k,t')dx'_a dv'_b=\int\int n_j(x'_a-v'_b(t-t'),v'_b,t')dx'_a dv'_b=1$, then Eq.~\eqref{rateeqautionsimple}  becomes
\begin{eqnarray}
 v_b\frac{\partial n_l(x_a,v_b,t)}{\partial x_a}+\frac{\partial n_l(x_a,v_b,t)}{\partial t}
 &&\approx\int_0^t n^{eq}_l(x_a,v_b,t')\Theta_l(t-t')dt'\nonumber\\
&& -\int_0^t n_l(x_a,v_b,t')\Theta_l(t-t')dt'.
\label{rateeqautionmoresimple}
\end{eqnarray}
Taking Laplace transform of Eq.~\eqref{rateeqautionmoresimple} yields
\begin{eqnarray}
 && v_b\frac{\partial \hat{n}_l(x_a,v_b,s)}{\partial x_a}+s\hat{n}_l(x_a,v_b,u)-n_l(x_a, v_b,0)\nonumber\\
 &&\approx \big(\hat{n}^{eq}_l(x_a,v_b,s)-\hat{n}_l(x_a,v_b,s)\big)\hat{\Theta}_l(s).
\label{rateeqautionmoresimple2}
\end{eqnarray}
We substitute Eq.~\eqref{thetapowerlaw} into Eq.~\eqref{rateeqautionmoresimple2} and take the inverse Laplace transform and obtain the fractional BGK equation as following
\begin{eqnarray}
 v_b\frac{\partial n_l(x_a,v_b,t)}{\partial x_a}+\frac{\partial n_l(x_a,v_b,t)}{\partial t}&&\approx \frac{1}{\Gamma(1-\beta)\tau_0^{\beta}}e^{-\alpha t}D_t^{1-\beta}\bigg\{e^{\alpha t}[n^{eq}_l(x_a,v_b,t)\nonumber\\
 &&- n_l(x_a,v_b,t')]\bigg\},
\label{fractionalBGKequation}
\end{eqnarray}
where $e^{-\alpha t}D_{t}^{1-\beta} \big(e^{\alpha t}f(t)\big)$ is a fractional derivative operator \cite{ZL2018}, defined by
\begin{eqnarray}
e^{-\alpha t}D_{t}^{1-\beta} \big(e^{\alpha t}f(t)\big)&&=\frac{1}{\Gamma(1-\beta)}\bigg(\frac{d}{dt}\int_{0}^{t}e^{-\alpha(t-t')}\frac{f(t')}{(t-t')^{\beta}}dt'\nonumber\\
&&+\alpha \int_{0}^{t}e^{-\alpha(t-t')}\frac{f(t')}{(t-t')^{\beta}}dt' \bigg),
\label{RLgenerated}
\end{eqnarray}
whose Laplace transform satisfies
\begin{eqnarray}
L\{e^{-\alpha t}D^{1-\beta}(e^{\alpha t} f(t))\}=(s+\alpha)^{1-\beta}f(s).~~
\label{RLgeneratedlaplace}
\end{eqnarray}
Here, $L(f(t))$ denotes the Laplace transform of $f(t)$. Note that the fractional operator \eqref{RLgenerated} can reduce to the Riemann--Liouville fractional derivative operator  when $\alpha=0$ \cite{P1999}.  From fractional BGK equation \eqref{fractionalBGKequation} one can see that the time evolution of the probability of partilce $l$ with position $x_a$ and velocity $v_b$ at time $t$ has fractional memory of the history and depends on the collisions with other particles. This fractional memory comes from the power--law distribution of collision waiting times of the particle $l$.
When $\alpha=0$, Eq.~\eqref{fractionalBGKequation} reduces to the phenomenological fractional BGK equation
for $F=0$ predicted in \cite{G2017}.

\section{Chapman--Enskog method and the macroscopic fractional Navier--Stokes equations: }
As we all know, by using the Chapman--Enskog method, the BGK equation \eqref{BGK} can be reduced to the classical Euler equations and Navier--Stokes equations \cite{CC1970}. We herein shall use the Chapman--Enskog method and the fractional BGK equation \eqref{fractionalBGKequation} to obtain fractional Navier--Stokes equations. We will consider the simplest 1D case of the collision renewal process, and the results can be easily extended to multidimensional space.

\begin{theorem}[Fractional Navier--Stokes equations]\label{fractionalnsequationthm}
If in the collision renewal process the PDF for the waiting time $t_l$ is $\psi_{l}(t)\sim\tau_0^{\beta}\beta\frac{1}{t^{1+\beta}}$ where $0<\beta<1$, and the PDF for $t_j$ is $\psi_{j}(t)=r_j e^{-r_j t}$  for $ j\neq l$. Then one can otain the fractional Navier--Stokes equations
\begin{eqnarray}
    \frac{\partial \rho} {\partial t}+\frac{\partial (\rho u)} {\partial x}=0,
 \label{NS1}
\end{eqnarray}
\begin{eqnarray}
  \frac{\tau}{\Gamma(1-\beta)\tau_0^{\beta} }e^{-\alpha t}D_t^{1-\beta}\bigg[e^{\alpha t} \bigg(\frac{\partial (\rho u)} {\partial t}+\frac{\partial E} {\partial x}\bigg)\bigg]+\frac{\partial \tilde{p}^{(1)}} {\partial x}=0,~~~~~
 \label{NS2}
\end{eqnarray}
\begin{eqnarray}
    \frac{\tau}{\Gamma(1-\beta)\tau_0^{\beta} }e^{-\alpha t}D_t^{1-\beta}\bigg[e^{\alpha t} \bigg(\frac{\partial E}{\partial t}+\frac{\partial u(E+p)} {\partial x}\bigg)\bigg] +\frac{\partial \tilde{q}}{\partial x}=0.
    \label{NS3}
\end{eqnarray}
Here, $n=\int f^{(0)}dv$, $\rho=m\int f^{(0)}dv$, $u=\frac{1}{n}\int f^{(0)}vdv$, $T=\frac{m}{n K_B}\int (v-u)^2 f^{(0)}dv$, $E=\frac{1}{2}\int m v^2 f^{(0)}dv=\frac{1}{2}nK_BT+\frac{1}{2}\rho u^2$ and $P=\int m(v-u)^2 f^{(0)}dv=n K_B T$ where $m$ is the mass of the particle and $K_B$ is the Boltzmann constant. In addition,
$\tilde{p}^{(1)}=\int mv^2 \tilde{f}^{(1)}dv$ is the classical viscous stress and  $\tilde{q}=\frac{1}{2}\int mv^3 \tilde{f}^{(1)}dv$ is the classical heat loss quantity  with  $\tilde{f}^{(1)}=-\tau[\frac{\partial f^{(0)}} {\partial t_1}+v\frac{\partial f^{(0)}} {\partial x_1}]$.
\end{theorem}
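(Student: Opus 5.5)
The plan is to run a Chapman--Enskog expansion directly on the fractional BGK equation \eqref{fractionalBGKequation}, writing $f=n_l$ in one space dimension. The first step is to invert the memory operator. Denote by $\mathcal{D}f:=e^{-\alpha t}D_t^{1-\beta}(e^{\alpha t}f)$, whose Laplace symbol is $(s+\alpha)^{1-\beta}$ by \eqref{RLgeneratedlaplace}. The equation then reads
\[
\partial_t f+v\partial_x f=\frac{1}{\Gamma(1-\beta)\tau_0^{\beta}}\,\mathcal{D}\big(f^{eq}-f\big).
\]
We introduce the bookkeeping parameter $\tau$ (the classical mean collision time) and expand
\[
f=f^{(0)}+\tau \tilde f^{(1)}+\cdots,\qquad \partial_t=\partial_{t_0}+\tau\partial_{t_1}+\cdots,\qquad f^{(0)}=f^{eq}.
\]
The collision side, after being multiplied by $\tau$ and rescaled, then carries the factor $\frac{\tau}{\Gamma(1-\beta)\tau_0^\beta}\mathcal{D}$ acting on the streaming terms. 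Concretely, the ansatz is
\[
\frac{\tau}{\Gamma(1-\beta)\tau_0^\beta}\,\mathcal{D}\big[(\partial_t+v\partial_x)f^{(0)}\big]=-(\partial_t+v\partial_x)\big(f-f^{(0)}\big)\quad\text{to leading order.}
\]
This is the form in which the statement's $\tilde f^{(1)}=-\tau[\partial_{t_1}f^{(0)}+v\partial_{x_1}f^{(0)}]$ plays the role of the classical first-order correction. I would make this precise by multiplying \eqref{rateeqautionmoresimple2} by $(s+\alpha)^{\beta-1}$ in Laplace space and moving the memory kernel onto the streaming side. The result is the equivalent form
\[
\frac{\tau}{\Gamma(1-\beta)\tau_0^\beta}\mathcal{D}\big[e^{\alpha t}\text{-weighted streaming of } f\big]\;\sim\;f^{eq}-f,
\]
where the fractional operator sits on the moments of the transport part, exactly as in \eqref{NS2}--\eqref{NS3}.

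Next we take the collision invariants $\{m,\,mv,\,\tfrac12 mv^2\}$ and integrate in $v$. By construction of $f^{eq}$ (matching $n$, $u$, $T$), the moments of $f^{eq}-f$ vanish; this is the solvability condition. The moments of $f^{(0)}$ yield $\rho$, $\rho u$ and $E$, together with the fluxes $\rho u^2+P$ and $u(E+P)$. Since $\int f^{(1)}\,dv=0$, the mass moment has no correction term. Because $\mathcal{D}$ has symbol $(s+\alpha)^{1-\beta}$, which is invertible, $\mathcal{D}g=0$ with zero initial data forces $g=0$, and so the continuity equation \eqref{NS1} holds without memory. For momentum and energy, the zeroth-order moments under $\mathcal{D}$ give the bracketed Euler fluxes, while the first-order parts produce $\partial_x\int mv^2\tilde f^{(1)}dv=\partial_x\tilde p^{(1)}$ and $\partial_x\tfrac12\int mv^3\tilde f^{(1)}dv=\partial_x\tilde q$. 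This yields \eqref{NS2} and \eqref{NS3}, with $\tilde f^{(1)}$ computed as in the classical BGK case from \eqref{BGK}.

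The main obstacle is the consistency of the two time scales with the nonlocal operator. One must justify that $\mathcal{D}$ commutes with $\partial_x$ and with the velocity moments; it does, being a time convolution with a $v$-independent kernel. One must also justify that the first-order correction may be taken as the classical $\tilde f^{(1)}$ rather than a memory-convolved version. I would handle the latter by working entirely in Laplace space, where $\hat{\mathcal{D}}$ is multiplication by $(s+\alpha)^{1-\beta}$, doing the expansion there to order $\tau$, and transforming back. Equality holds only to $O(\tau^2)$, so the result is, like the BGK approximation itself, an asymptotic statement.
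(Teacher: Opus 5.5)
\paragraph{Verdict: genuine gap.} The gap is at the step that produces the fractional structure.

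\paragraph{The missing relation.} In the paper, the $O(\epsilon)$ Chapman--Enskog equation $\partial_{t_1}f^{(0)}+v\partial_{x_1}f^{(0)}=-\frac{1}{\Gamma(1-\beta)\tau_0^{\beta}}e^{-\alpha t}D_t^{1-\beta}(e^{\alpha t}f^{(1)})$ is inverted in Laplace space. This gives $L(f^{(1)})=\frac{\Gamma(1-\beta)\tau_0^{\beta}}{\tau}(s+\alpha)^{\beta-1}L(\tilde f^{(1)})$. So the true first-order correction \emph{is} a memory-convolved version (a tempered fractional integral) of the classical $\tilde f^{(1)}$. Hence $p^{(1)}$ and $q$ are the same fractional integrals of $\tilde p^{(1)}$ and $\tilde q$.

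The momentum and energy balances are then written with the actual fluxes $\partial_x p^{(1)}$ and $\partial_x q$. The operator $\frac{\tau}{\Gamma(1-\beta)\tau_0^{\beta}}e^{-\alpha t}D_t^{1-\beta}(e^{\alpha t}\,\cdot)$ is applied to the \emph{whole} equation. By \eqref{RLgeneratedlaplace}, this turns $\partial_x p^{(1)}$ into $\partial_x\tilde p^{(1)}$ and leaves the operator on the Euler residual, which gives \eqref{NS2}--\eqref{NS3}.

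\paragraph{Your plan runs the opposite way.} You propose to justify that the first-order correction ``may be taken as the classical $\tilde f^{(1)}$ rather than a memory-convolved version.'' That is false for the fractional BGK equation. If it were true, the moment equations would read $\partial_t(\rho u)+\partial_x(\cdot)+\partial_x\tilde p^{(1)}=0$ with no fractional operator at all, i.e.\ the classical Navier--Stokes system.

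Your sentence ``the zeroth-order moments under $\mathcal{D}$ give the bracketed Euler fluxes, while the first-order parts produce $\partial_x\tilde p^{(1)}$'' applies the operator to one part of the equation but not the other. That asymmetry is exactly what the relation between $p^{(1)}$ and $\tilde p^{(1)}$ (and between $q$ and $\tilde q$) is needed to justify, and your proposal never derives it.

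\paragraph{Problems with the setup.}
\begin{itemize}
\item The displayed ansatz $\frac{\tau}{\Gamma(1-\beta)\tau_0^{\beta}}\mathcal{D}[(\partial_t+v\partial_x)f^{(0)}]=-(\partial_t+v\partial_x)(f-f^{(0)})$ does not follow from \eqref{fractionalBGKequation}.
\item Multiplying \eqref{rateeqautionmoresimple2} by $(s+\alpha)^{\beta-1}$ does not yield $\frac{\tau}{\Gamma(1-\beta)\tau_0^{\beta}}\mathcal{D}[\text{streaming}]\sim f^{eq}-f$. It yields $\Gamma(1-\beta)\tau_0^{\beta}\,\mathcal{I}[\text{streaming}]=f^{eq}-f$, where $\mathcal{I}$ is the tempered fractional \emph{integral} with symbol $(s+\alpha)^{\beta-1}$.
\item Your version is dimensionally inconsistent: the operator $\frac{\tau}{\Gamma(1-\beta)\tau_0^{\beta}}\mathcal{D}$ is dimensionless, so the left side has units of $f$/time while the right side has units of $f$.
\item Taking velocity moments of the correct form only reproduces the exact conservation laws, with the unknown fluxes $p^{(1)}$ and $q$. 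You still need the $O(\epsilon)$ Laplace inversion above to express these through $\tilde p^{(1)}$ and $\tilde q$.
\item Expanding $f=f^{(0)}+\tau\tilde f^{(1)}$ double-counts $\tau$, since $\tilde f^{(1)}$ already carries that factor.
\end{itemize}
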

\begin{proof}
In this proof we shall respectively use $f(x,v,t)$ and $f^{eq}(x,v,t)$ to denote the densities $n_0\cdot n_l(x_a, v_b, t)$ and $n_0\cdot n^{eq}_l(x_a, v_b, t)$ for simplicity. Here, $n_0$ is the number of particles.
Let
\begin{eqnarray}
 f\sim f^{(0)}+\epsilon f^{(1)}+\epsilon^2 f^{(2)},
 \label{Chapman-Enskogassumpttion1}
\end{eqnarray}
\begin{eqnarray}
 \frac{\partial f}{\partial t}=\frac{\partial f}{\partial t_0}+\epsilon \frac{\partial f}{\partial t_1}+\epsilon^2 \frac{\partial f}{\partial t_2},
 \label{Chapman-Enskogassumpttion2}
\end{eqnarray}
and
\begin{eqnarray}
  \frac{\partial f}{\partial x}=\epsilon \frac{\partial f}{\partial x_1},
  \label{Chapman-Enskogassumpttion3}
\end{eqnarray}
for small $\epsilon$ where $f^{(0)}=f^{eq}$.
We substitute Eqs.~\eqref{Chapman-Enskogassumpttion1}, \eqref{Chapman-Enskogassumpttion2} and \eqref{Chapman-Enskogassumpttion3} into Eq.~\eqref{fractionalBGKequation}, and obtain
\begin{eqnarray}
  &&  \bigg(\frac{\partial f^{(0)}}{\partial t_0}+\epsilon \frac{\partial f^{(0)}}{\partial t_1}+\epsilon^2 \frac{\partial f^{(0)}}{\partial t_2}\bigg)+\epsilon \bigg(\frac{\partial f^{(1)}}{\partial t_0}+\epsilon \frac{\partial f^{(1)}}{\partial t_1}+\epsilon^2 \frac{\partial f^{(1)}}{\partial t_2}\bigg) \nonumber\\
  &&+\epsilon^2\bigg(\frac{\partial f^{(2)}}{\partial t_0}+\epsilon \frac{\partial f^{(2)}}{\partial t_1}+\epsilon^2 \frac{\partial f^{(2)}}{\partial t_2}\bigg)+v\bigg(\epsilon\frac{\partial f^{(0)}}{\partial x_1}+\epsilon^2 \frac{\partial f^{(1)}}{\partial x_1}+\epsilon^3 \frac{\partial f^{(2)}}{\partial x_1}\bigg)\nonumber\\
  &&=-\frac{1}{\Gamma(1-\beta)\tau_0^{\beta}}e^{-\alpha t}D_t^{1-\beta}\bigg\{e^{\alpha t}\bigg[\epsilon f^{(1)}+\epsilon^2 f^{(2)}\bigg]\bigg\}.
  \end{eqnarray}
Since the terms with the same power of $\epsilon$ are equal, we obtain
\begin{eqnarray}
    \frac{\partial f^{(0)}} {\partial t_0}=0,
 \label{zeroorder}
\end{eqnarray}
for $\epsilon^0$,
\begin{eqnarray}
    &&\frac{\partial f^{(0)}} {\partial t_1}+v\frac{\partial f^{(0)}} {\partial x_1}=-\frac{1}{\Gamma(1-\beta)\tau_0^{\beta}}e^{-\alpha t}D_t^{1-\beta}(e^{\alpha t} f^{(1)}),
 \label{1order}
\end{eqnarray}
for $\epsilon^1$,
\begin{eqnarray}
    &&\frac{\partial f^{(0)}} {\partial t_2}+\frac{\partial f^{(1)}} {\partial t_1}+\frac{\partial f^{(2)}} {\partial t_0}+\frac{\partial f^{(1)}} {\partial x_1}=-\frac{1}{\Gamma(1-\beta)\tau_0^{\beta}}e^{-\alpha t}D_t^{1-\beta}(e^{\alpha t}f^{(2)}),
 \label{2order}
\end{eqnarray}
for $\epsilon^2$. In Eq.~\eqref{1order} we assume $\frac{\partial f^{(1)}} {\partial t_0}=0$ as in the classical case.

We multiply by $m$, $mv$ and $\frac{1}{2}mv^2$ on both sides of Eq.~\eqref{zeroorder}, and integrate $v$ from $-\infty$ to $\infty$, respectively, and obtain
\begin{eqnarray}
    \frac{\partial \rho} {\partial t_0}=0,
 \label{zeroorder1}
\end{eqnarray}
\begin{eqnarray}
    \frac{\partial (\rho u)} {\partial t_0}=0,
 \label{zeroorder2}
\end{eqnarray}
\begin{eqnarray}
    \frac{1}{2} \frac{\partial (\rho K_B T)}{\partial t_0}+\frac{1}{2} \frac{\partial (\rho u^2)}{\partial t_0}=0.
 \label{zeroorder3}
\end{eqnarray}
Since $\rho K_B T+\rho u^2=E$, Eq.~\eqref{zeroorder3} can also be written in the form
\begin{eqnarray}
 \frac{\partial E}{\partial t_0}=0,
 \label{zeroorder3-2}
\end{eqnarray}

Analogously,
we multiply by $m$, $mv$ and $\frac{1}{2}mv^2$ on both sides of Eq.~\eqref{1order}, and integrate $v$ from $-\infty$ to $\infty$, respectively, and obtain
\begin{eqnarray}
    \frac{\partial \rho} {\partial t_1}+\frac{\partial (\rho u)} {\partial x_1}=0,
 \label{1order1}
\end{eqnarray}
\begin{eqnarray}
    \frac{\partial (\rho u)} {\partial t_1}+\frac{\partial E} {\partial x_1}=0,
 \label{1order2}
\end{eqnarray}
\begin{eqnarray}
    \frac{\partial E}{\partial t_1}+\frac{\partial u(E+P)}{\partial x_1}=0,
 \label{1order3}
\end{eqnarray}
In the above three equations we all used that the integrals for the high order term $f^{(1)}$ on the right hand of Eq.~\eqref{1order} are $0$.

We take the Laplace transform of Eq.~\eqref{1order} and find
\begin{eqnarray}
     &&L\bigg[\frac{\partial f^{(0)}} {\partial t_1}+v\frac{\partial f^{(0)}} {\partial x_1}\bigg]=-\frac{1}{\Gamma(1-\beta)\tau_0^{\beta}}(s+\alpha)^{1-\beta} L(f^{(1)}).
 \label{1orderlapalace}
\end{eqnarray}
Thus, one has
\begin{eqnarray}
      L(f^{(1)})= -\Gamma(1-\beta)\tau_0^{\beta}(s+\alpha)^{\beta-1}L\bigg[\frac{\partial f^{(0)}} {\partial t_1}+v\frac{\partial f^{(0)}} {\partial x_1}\bigg].
 \label{1orderlaplace2}
\end{eqnarray}
We multiply by $mv^2$ on both sides of Eq.~\eqref{1orderlaplace2}, and integrate $v$ from $-\infty$ to $\infty$,  and obtain
\begin{eqnarray}
      L(p^{(1)})= \frac{1}{\tau}\Gamma(1-\beta)\tau_0^{\beta}(s+\alpha)^{\beta-1}L(\tilde{P}^{(1)}),
 \label{p1fractional}
\end{eqnarray}
where $p^{(1)}=\int mv^2 f^{(1)}dv$, $\tilde{p}^{(1)}=\int mv^2 \tilde{f}^{(1)}dv$ and  $\tilde{f}^{(1)}=-\tau[\frac{\partial f^{(0)}} {\partial t_1}+v\frac{\partial f^{(0)}} {\partial x_1}]$.

Analogously,
we multiply by $\frac{1}{2}mv^3$ on both sides of Eq.~\eqref{1orderlaplace2}, and integrate $v$ from $-\infty$ to $\infty$,  and obtain
\begin{eqnarray}
      L(q)= \frac{1}{\tau}\Gamma(1-\beta)\tau_0^{\beta}(s+\alpha)^{\beta-1}L(\tilde{q}),
 \label{qfractional}
\end{eqnarray}
where $q=\frac{1}{2}\int mv^3 f^{(1)}dv$ and  $\tilde{q}=\frac{1}{2}\int mv^3 \tilde{f}^{(1)}dv$.

 We multiply by $m$, $mv$ and $\frac{1}{2}mv^2$ on both sides of Eq.~\eqref{2order}, and integrate $v$ from $-\infty$ to $\infty$, respectively, and obtain
\begin{eqnarray}
    \frac{\partial \rho} {\partial t_2}+\frac{\partial (\rho u)} {\partial x_1}=0,
 \label{2order1}
\end{eqnarray}
\begin{eqnarray}
    \frac{\partial (\rho u)} {\partial t_2}+\frac{\partial p^{(1)}} {\partial x_1}=0,
 \label{2order2}
\end{eqnarray}
\begin{eqnarray}
    \frac{\partial E}{\partial t_2}+\frac{\partial q}{\partial x_1}=0.
 \label{2order3}
\end{eqnarray}
Furthermore, multiplying by $\epsilon$ of both sides of Eqs. \eqref{1order1}--\eqref{1order3}, and combining with $\frac{\partial f}{\partial t}=\frac{\partial f^{(0)}}{\partial t_0}+\epsilon \frac{\partial f^{(0)}}{\partial t_1}$ and Eqs.~\eqref{zeroorder1}, \eqref{zeroorder2} and \eqref{zeroorder3-2},
we obtain the classical
Euler equations
\begin{eqnarray}
    \frac{\partial \rho} {\partial t_1}+\frac{\partial (\rho u)} {\partial x}=0,
 \label{euler1}
\end{eqnarray}
\begin{eqnarray}
    \frac{\partial (\rho u)} {\partial t_1}+\frac{\partial E} {\partial x}=0,
 \label{euler2}
\end{eqnarray}
\begin{eqnarray}
    \frac{\partial E}{\partial t_1}+\frac{\partial u(E+P)}{\partial x}=0.
 \label{euler3}
\end{eqnarray}
Analogously, we multiply by $\epsilon$ of both sides of Eqs. \eqref{1order1}--\eqref{1order3}, and by $\epsilon^2$ of both sides of Eqs. \eqref{2order1}--\eqref{2order3},  and combine with Eqs.~\eqref{Chapman-Enskogassumpttion2}, \eqref{Chapman-Enskogassumpttion3}, \eqref{zeroorder1}, \eqref{zeroorder2} and \eqref{zeroorder3-2}, and obtain the fractional Navier--Stokes equations \eqref{NS1}, \eqref{NS2} and \eqref{NS3}.
Here, in the derivation of Eqs.~\eqref{NS2} and \eqref{NS3} Eqs.~\eqref{RLgeneratedlaplace}, \eqref{p1fractional} and \eqref{qfractional} are also used.
\end{proof}

\section{Conclusion}\label{sec-conclusion}

The macroscopic Navier--Stokes equations, formulated in the 19th Century, are still not solved now, which is one of the hard Millennium problems proposed by CMI. In Hilbert's sixth problem, he suggests a method by deriving the Boltzmann equation as an intermediate step from the microscopic Newton’s laws and the probability theory.
In this paper we use the  collision  renewal process to derive the corresponding microscopic master equation \eqref{master} for the time evolution of the probability of the state tensor of positions and velocities of all particles, and then derive the mesoscopic generalized rate equation \eqref{rateequation} for one particle, whose special cases can lead to the Boltzmann equation \eqref{Boltzmannequation}, the classical BGK equation \eqref{rateeqautionsimpleexponent} and the fractional BGK equation \eqref{fractionalBGKequation}. Furthermore, based on the BGK equations we obtain the Euler equations \eqref{euler1}--\eqref{euler3} and the fractional Navier--Stokes equations \eqref{NS1}--\eqref{NS3} by using Chapman--Enskog Method.  Since the solution of the stochastic simulation of trajectories of the collision renewal process can be obtained according to Sec.~\ref{simulationsteps}, we actually find an equivalent stochastic method to solve the classical and fractional NS equations. Finally,  we also establish trajectory counter-part for the phenomenological fractional BGK equation predicted in \cite{G2017}  for the force $F=0$ using the collision renewal process.

There are some further problems worthy of investigation, such as the blow--up properties of the energy and the velocity of fractional NS equations based on the collision renewal process, the fractional NS equations in the space-- and time--dependent force field, and so on.

\backmatter
\section*{Abbreviations}
Navier--Stokes: NS;\\
Clay Mathematics Institute: CMI;\\
continuous time random walk: CTRW; \\
independent and identically distributed: i.i.d.;\\
probability density function: PDF.\\

\section*{Declarations}

The authors declare that there are no conflict of interests.

\section*{Data availability}
No data was used for the research described in the paper.


\bibliography{sn-bibliography}

\end{document}